\theoremstyle{plain}
\newtheorem{thm}{Theorem}[section]
\newtheorem{lem}{Lemma}[section]
\newtheorem{cor}{Corollary}[section]
\theoremstyle{definition}
\theoremstyle{remark}
\title[\resizebox{4.5in}{!}{On The Space-Time Fractional Schr\"{o}dinger Equation with time independent potentials}]{On The Space-Time Fractional Schr\"{o}dinger Equation with time independent potentials}
\author{saleh Baqer}
\thanks{S. Baqer: Corresponding author.}
\address{Department of Mathematics\\
  Faculty of Science\\
  Kuwait University}
\email[S.~Baqer]{mpcosmo57@gmail.com}
\author{Lyubomir Boyadjiev}
\address{Department of Mathematics\\
  Faculty of Science\\
  Kuwait University}
\email[L.~Boyadjiev]{boyadjievl@yahoo.com}
\begin{document}
\keywords{Fractional calculus, fractional Schr\"{o}dinger equation, quantum Riesz-Feller space fractional derivative, Caputo time fractional derivative, Fourier transform in momentum representation}
\maketitle
\begin{abstract}This paper is about the fractional Schr\"{o}dinger equation (FSE) expressed in terms of the quantum Riesz-Feller space fractional and the Caputo time fractional derivatives. The main focus is on the case of time independent potential fields as a Dirac-delta potential and a linear potential. For such type of potential fields the separation of variables method allows to split the FSE into space fractional equation and time fractional one. The results obtained in this paper contain as particular cases already known results for FSE in terms of the quantum Riesz space fractional derivative and standard Laplace operator.
\end{abstract}

\section{Problem Formulation}
The particle motion in quantum mechanics is governed by the Schr\"{o}dinger equation (SE) and the particle motion state is determined by the wave function. Feynman and Hibbs (\cite{feynman}) deduced the standard SE by means of path integrals over the Brownian trajectories. This idea was generalized by Laskin (\cite{laskin1}, \cite{laskin2}) who introduced the principles of the fractional quantum mechanics through the path integral approach over the L\'{e}vy trajectories. For a particle with a mass $m$ moving in a potential field $V(x,t)$ Laskin introduced the FSE of the form $\left(1<\alpha\leq2\right)$,
\begin{equation}\label{1}
{-D}_{\alpha{}}{(\hslash{}\nabla{})}^{\alpha{}}\psi(x,t)+V(x,t)\psi(x,t)=i\hslash{}\frac{\partial{}}{\partial{}t}\psi(x,t).
\end{equation}
In (\ref{1}), $\hslash$ is the reduced Planck constant, ${D}_{\alpha{}}$ is the generalized fractional diffusion coefficient with physical dimension $\left[D_{\alpha{}}\right]=\text{erg}^{1-\alpha{}}\times{}\text{cm}^{\alpha{}}\times{}\text{sec}^{-\alpha{}}$ ($D_{\alpha{}}$=$1/2m$ for $\alpha{}=2$), $ \psi{}\left(x,t\right) $ is the wave function and $-{(\hslash{}\nabla{})}^{\alpha{}}$ is the quantum Riesz space fractional derivative which is defined by (\cite{laskin2})
\begin{equation}\label{qrfd}
\mathcal{F}\left\{-{(\hslash{}\nabla{})}^{\alpha{}}\psi(x,t); p\right\}={\left|p\right|}^{\alpha}\hat{\psi}(p,t),
\end{equation}
where
\begin{equation}\label{ft}
\hat{\psi}(p,t)=\mathcal{F}\left\{\psi(x,t);p\right\}=\frac{1}{2\pi{}\hslash{}}\int_{-\infty{}}^{\infty{}}e^{-ipx/\hslash{}}\
\psi(x,t)\ dx
\end{equation}
is the Fourier transform in momentum representation of a function $ \psi\in{S} $, where $ S $ is the space of rapidly decreasing functions, and
\begin{equation}\label{ift}
\psi(x,t)=\mathcal{F}^{-1}\left\{\
\hat{\psi}(p,t);x\right\}=\frac{1}{2\pi\hslash{}}\int_{-\infty{}}^{\infty{}}e^{ipx/\hslash{}}\
\hat{\psi}(p,t)\ dp
\end{equation}
is the inverse Fourier transform. The adopted Fourier transform (\ref{ft}) and its inverse (\ref{ift}) satisfy the Plancherel theorem (see e.g. \cite[Sec. 2.7]{qm}).

The FSE (\ref{1}) is solved in case of the infinite potential well, a Dirac-delta potential, a linear potential and a Coulomb potential (see \cite{laskin2}, \cite{dong1}). For $ \alpha=2 $, the equation (\ref{1}) reduces to the standard SE.

The generalization of (\ref{1}) when the time derivative is replaced by the Caputo fractional derivative (\cite{pd}, \cite{caputo}) of order $ \beta $,
\begin{equation}\label{cfd}
{{}_a^CD}_t^{\beta{}}f\left(t\right)=\left\{\begin{array}{ll}
\frac{1}{\Gamma(n-\beta{})}\
\int_a^t\frac{f^{\left(n\right)}(\tau{})}{{(t-\tau{})}^{\beta{}+1-n}}\ d\tau{}, & n-1<\beta{}<n,\ n\in{}\mathbb{N} \\
\frac{d^n}{dt^n}f\left(t\right), & \beta{}=n\in{}\mathbb{N}
\end{array} \right.,
\end{equation}
was studied recently by many authors (see e.g. \cite{dong2}, \cite{bayin1}). The FSE with the quantum Riesz-Feller space fractional derivative $ {D}_{\theta}^{\alpha} $ of order $ \alpha $, skewness $ \theta $ and the Caputo time fractional derivative $ {{}_0^CD}_{t}^{\beta} $ of order $ \beta$ has the form
\begin{equation}\label{sfse}
{C}_{\alpha}{D}^{\alpha}_{\theta}\psi(x,t)+ V(x,t)\psi(x,t)=\left(i\hslash\right)^{\beta}{{}_0^CD}_t^{\beta{}}\psi(x,t),
\end{equation}
where $ {C}_{\alpha} $ is a positive constant $ \left({C}_{\alpha}={\hslash}^{2}/2m~\text{for}~\alpha=2\right) $.
The Riesz-Feller space fractional derivative $ {{}_xD}^{\alpha}_{\theta} $, which is the generalization of the Riesz space fractional derivative, of order $ \alpha $ and skewness $ \theta $ is defined through its Fourier transform as (\cite[Def. 6.7.]{saxena})
\begin{equation}
\mathcal{F}\left\{{}_x{D}_{\theta}^{\alpha}\psi(x,t);p\right\}=-{\eta{}}_{\alpha{}}^{\theta{}}\hat{\psi}(p,t),
\end{equation}
where
\begin{equation}\label{rfc}
{\eta}_{\alpha}^{\theta}={\left\vert{p}\right\vert}^{\alpha}e^{iSgn\left(p\right)\theta\pi/2},~~~~~~~~0<\alpha\leq{2},~~~~~~~\left|\theta\right|\leq{}\text{min}\left\{\alpha,2-\alpha\right\}.
\end{equation}
 The quantum Riesz-Feller space fractional derivative $ {D}^{\alpha}_{\theta} $ is defined to be the Riesz-Feller space fractional derivative timed by a minus sign (See \cite{hello}), that is,
\[
{D}^{\alpha}_{\theta}=-{{}_xD}^{\alpha}_{\theta}.
\]
Thus for the quantum  Riesz-Feller space fractional derivative
\begin{equation}\label{qrffd}
\mathcal{F}\left\{{D}^{\alpha}_{\theta}\psi(x,t); p\right\}={\eta}_{\alpha}^{\theta}\hat{\psi}(p,t).
\end{equation}
When $ \theta=0 $, (\ref{qrffd}) reduces to
\[
\mathcal{F}\left\{{D}^{\alpha}_{0}\psi(x,t); p\right\}={\left|p\right|}^{\alpha}\hat{\psi}(p,t),
\]
where (\cite[(6.149)]{saxena})
\[
{D}_{0}^{\alpha}=-{{}_xD}_{0}^{\alpha}=-\left[-\left(-\Delta\right)^{\alpha/2}\right]=\left(-\Delta\right)^{\alpha/2}.
\]
Thus, the operator $ \left(-\Delta\right)^{\alpha/2} $ is the quantum Riesz space fractional derivative in case of generalizing the SE with the quantum Riesz-Feller space fractional derivative. For $ \theta=0,\,\alpha=2~\text{and} ~\beta=1 $ (\ref{sfse}) reduces to the standard SE. The interested readers in some new results related to the Riesz-Feller space fractional derivative, like its possible geometrical and physical meanings, can refer to the recent papers (\cite{her1}-\cite{her3}).

When $ \theta=0 $, the following FSE
\begin{equation}\label{rem1}
{C}_{\alpha}\left(-\Delta\right)^{\alpha/2}\psi(x,t)+ V(x,t)\psi(x,t)=E\psi(x,t)
\end{equation}
is equivalent to the one in terms of the fractional diffusion coefficient $ D_{\alpha} $, that is,
\begin{equation}\label{rem2}
{-D}_{\alpha{}}{(\hslash{}\nabla{})}^{\alpha{}}\psi(x,t)+V(x,t)\psi(x,t)=E\psi(x,t).
\end{equation}
Indeed, when $ \alpha=2 $ the equations (\ref{rem1}) and (\ref{rem2}) reduce to the standard SE.

For a free particle the equation (\ref{sfse}) with the standard first order time derivative is solved and the solution is obtained (\cite{hello}) in terms of the Fox $H-$function $ H^{m,n}_{p,q}(z) $ (\cite[Sec. 1.2]{saxena}).

We consider in this paper two cases of time independent potentials: a Dirac-delta potential and a linear potential. The method of separation of variables can be effectively used and under the assumption $ \psi(x,t)=f(t)\phi(x) $, to reduce equation (\ref{sfse}) to the time fractional equation
\begin{equation}\label{tfse}
\left(i\hslash\right)^{\beta}{{}_0^CD}_t^{\beta{}}f(t)=Ef(t),
\end{equation}
and the space fractional equation
\begin{equation}\label{2sfse}
C_{\alpha{}}D_{\theta{}}^{\alpha{}}\phi(x)+V(x)\phi(x)=E\phi(x),
\end{equation}
where $ E $ refers to the energy.

\section{Time Fractional Equation}

\begin{thm}(\cite{dong2})
If $ 0<\beta\leq{1} $, the solution of the time fractional equation (\ref{tfse}) is of the form
\[
f(t)=f(0)H^{1,1}_{1,2}\left[-{\left(\dfrac{t}{i\hslash}\right)}^{\beta}E\left|\begin{array}{
ll}
(0,1)\\
(0,1),(0,\beta)
\end{array}\right.\right].
\]
\end{thm}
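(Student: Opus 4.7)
The plan is to reduce (\ref{tfse}) to a Mittag-Leffler function via the Laplace transform and then translate that into the Fox $H$-function of the theorem. Writing the equation as
$$
{{}_0^CD}_t^{\beta}f(t)=\lambda f(t),\qquad \lambda:=\frac{E}{(i\hslash)^{\beta}},
$$
with the principal branch of $(i\hslash)^{\beta}$, I would apply the Laplace transform in $t$ and use the standard rule
$$
\mathcal{L}\!\left\{{{}_0^CD}_t^{\beta}f(t);s\right\}=s^{\beta}\tilde{f}(s)-s^{\beta-1}f(0),\qquad 0<\beta\leq 1.
$$
This converts the fractional ODE into the algebraic equation $(s^{\beta}-\lambda)\tilde{f}(s)=s^{\beta-1}f(0)$, whose solution is $\tilde{f}(s)=f(0)\,s^{\beta-1}/(s^{\beta}-\lambda)$.

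Next I would invoke the classical Laplace pair $\mathcal{L}\{E_{\beta}(\lambda t^{\beta});s\}=s^{\beta-1}/(s^{\beta}-\lambda)$, valid for $\operatorname{Re}(s)>|\lambda|^{1/\beta}$, where $E_{\beta}$ denotes the single-parameter Mittag-Leffler function. Inversion then yields
$$
f(t)=f(0)\,E_{\beta}(\lambda t^{\beta})=f(0)\,E_{\beta}\!\left(\left(\frac{t}{i\hslash}\right)^{\beta}E\right).
$$

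The last step is a standard identity between the Mittag-Leffler and Fox $H$-functions. Starting from the Mellin-Barnes representation
$$
H^{1,1}_{1,2}\!\left[z\left|\begin{array}{l}(0,1)\\ (0,1),(0,\beta)\end{array}\right.\right]=\frac{1}{2\pi i}\int_{\mathcal{C}}\frac{\Gamma(-s)\,\Gamma(1+s)}{\Gamma(1+\beta s)}\,z^{s}\,ds,
$$
I would close $\mathcal{C}$ to the right and sum residues at the simple poles of $\Gamma(-s)$ at $s=k\in\mathbb{N}_{0}$; the resulting series is $\sum_{k\geq 0}(-z)^{k}/\Gamma(1+\beta k)=E_{\beta}(-z)$. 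Setting $z=-(t/(i\hslash))^{\beta}E$ recovers the stated formula. The main obstacle, apart from citing tabulated identities, is bookkeeping the branches: since $\beta$ is non-integer, $\lambda$ is complex, and one must verify that the Bromwich contour for the Laplace inversion and the Mellin-Barnes contour $\mathcal{C}$ can be chosen consistently in the assumed parameter ranges.
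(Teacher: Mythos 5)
Your proposal is correct and follows essentially the same route as the paper: Laplace transform of the Caputo derivative to get $\tilde f(s)=f(0)s^{\beta-1}/(s^{\beta}-(i\hslash)^{-\beta}E)$, inversion to the Mittag-Leffler function $E_{\beta}\bigl((t/(i\hslash))^{\beta}E\bigr)$, and the standard identity $E_{\beta}(z)=H^{1,1}_{1,2}\bigl(-z\bigr)$ with the stated parameters. The only cosmetic differences are that the paper inverts via a term-by-term geometric series expansion where you cite the known Laplace pair, and the paper cites the tabulated Mittag-Leffler/Fox $H$ identity where you rederive it by residues.
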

The proof of Theorem 2.1. is based on the usage of the Laplace transform applied to (\ref{tfse}) which implies (\cite[(2.253)]{pd})
\begin{equation}\label{miss11}
F(s)=f(0)\dfrac{s^{\beta-1}}{s^{\beta}-{\left(i\hslash\right)}^{-\beta}E}.
\end{equation}
Since
\[
\frac{1}{1-{\left(i\hslash{}\right)}^{-\beta{}}Es^{-\beta{}}}=\sum_{k=0}^{\infty{}}E^k{\left(i\hslash{}\right)}^{-\beta{}k}s^{-\beta{}k},
\]
then (\ref{miss11}) becomes
\begin{equation}\label{miss2}
F\left(s\right)=f\left(0\right)\sum_{k=0}^{\infty{}}\frac{{E^k\left(i\hslash{}\right)}^{-\beta{}k}}{s^{\beta{}k+1}}.
\end{equation}
By the inverse Laplace transform applied to (\ref{miss2}) it can be seen that
\begin{equation}
f\left(t\right)=f\left(0\right)\sum_{k=0}^{\infty{}}\frac{{\left({E\left(i\hslash{}\right)}^{-\beta{}}t^{\beta{}}\right)}^k}{\Gamma\left(\beta{}k+1\right)}=f\left(0\right)E_{\beta{}}\left({\left(\frac{t}{i\hslash{}}\right)}^{\beta{}}E\right),
\end{equation}
where $ E_{\beta{}}(z) $ is the one-parameter Mittag-Leffler function (\cite[(1.55)]{pd}). It is known that (\cite[(1.135)]{saxena})
\[
E_{\beta{}}\left(z\right)=H_{1,2}^{1,1}\left(-z\left|\begin{array}{
ll}
\left(0,1\right)…… \\
\left(0,1\right),(0,\beta{})
\end{array}\right.\right),
\]
and thus the validity of the theorem is proved.
\qed

\begin{cor}(\cite{griffiths})
For $ \beta=1 $, the solution of the time fractional equation (\ref{tfse}) is of the form
\[
f\left(t\right)=f\left(0\right)e^{-iEt/\hslash}.
\]
\end{cor}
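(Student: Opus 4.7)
The plan is to specialize Theorem 2.1 to the classical case $\beta=1$ and recognize the resulting special function as an ordinary exponential. First I would note that, by definition (\ref{cfd}), the Caputo derivative at $\beta=n=1$ collapses to the ordinary time derivative $d/dt$, so equation (\ref{tfse}) reduces to the familiar linear ODE $i\hslash\, f'(t) = E f(t)$, whose unique solution with prescribed initial datum $f(0)$ is the phase factor $f(t) = f(0)\exp(-iEt/\hslash)$.

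Alternatively, and more in keeping with the proof strategy of Theorem 2.1, I would begin from the intermediate Mittag--Leffler representation
\[
f(t) = f(0)\, E_{\beta}\!\bigl((t/(i\hslash))^{\beta} E\bigr)
\]
obtained by inverting the Laplace transform in (\ref{miss2}). Setting $\beta=1$ and invoking the classical identity $E_{1}(z) = e^{z}$ gives
\[
f(t) = f(0)\,\exp\!\bigl(Et/(i\hslash)\bigr) = f(0)\,\exp(-iEt/\hslash),
\]
after using $1/i = -i$; this matches the ODE route exactly.

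If one prefers to read the corollary straight off the Fox $H$-function statement in Theorem 2.1, the same conclusion follows by applying \cite[(1.135)]{saxena} at $\beta=1$, which identifies the relevant $H^{1,1}_{1,2}$ symbol with $E_{1}$ and hence with the exponential of its argument. I do not anticipate any genuine obstacle: the corollary serves as a consistency check, verifying that the generalized evolution (\ref{tfse}) recovers the standard Schr\"odinger phase factor $e^{-iEt/\hslash}$ in the classical limit $\beta=1$, and both arguments above are essentially bookkeeping.
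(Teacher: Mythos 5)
Your proposal is correct and matches the paper's (implicit) route: the corollary is stated as an immediate consequence of Theorem 2.1, and specializing the intermediate Mittag--Leffler representation $f(t)=f(0)E_{\beta}\bigl((t/(i\hslash))^{\beta}E\bigr)$ at $\beta=1$ via $E_{1}(z)=e^{z}$ gives exactly $f(t)=f(0)e^{-iEt/\hslash}$. The additional ODE verification you offer is a sound sanity check but adds nothing beyond what the paper intends.
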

\section{Space Fractional Equation}

\begin{lem}
For any $ x,\alpha>0, \rho\geq{0} $, and $ b\in{\mathbb{C}-\left\{0\right\}} $ with $ \left|\arg(b)\right|<\pi $, the following identical formula of the Fox $ H- $function holds,
\[
\frac{x^{\rho{}}}{1+bx^{\alpha{}}}=b^{-\frac{\rho{}}{\alpha{}}}H_{1,1}^{1,1}\left(bx^{\alpha{}}\left|\begin{array}{
cc}
(\frac{\rho{}}{\alpha{}},1) \\
(\frac{\rho{}}{\alpha{}},1)
\end{array}\right.\right).
\]
\end{lem}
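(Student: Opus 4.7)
\medskip

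The plan is to recognize the right-hand side as (essentially) the Mellin--Barnes representation of $1/(1+z)$ after a simple shift of the integration variable. Recall that the Fox $H$-function with $m=n=p=q=1$ and the prescribed parameters admits the Mellin--Barnes representation
\[
H^{1,1}_{1,1}\!\left(z\left|\begin{array}{c}(\rho/\alpha,1)\\(\rho/\alpha,1)\end{array}\right.\right)
=\frac{1}{2\pi i}\int_{L}\Gamma\!\left(\tfrac{\rho}{\alpha}+s\right)\Gamma\!\left(1-\tfrac{\rho}{\alpha}-s\right)z^{-s}\,ds,
\]
where $L$ is a Bromwich-type contour separating the poles of the two gamma factors. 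The condition $|\arg(b)|<\pi$ ensures that $z=bx^{\alpha}$ lies in the strip of validity, and the hypothesis $\rho\geq 0$, $\alpha>0$ guarantees that such a contour exists.

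Next, I would perform the substitution $u=\rho/\alpha+s$, which turns the integrand into $\Gamma(u)\Gamma(1-u)\,z^{-u}\cdot z^{\rho/\alpha}$, pulling the factor $z^{\rho/\alpha}$ out of the integral. Applying Euler's reflection formula $\Gamma(u)\Gamma(1-u)=\pi/\sin(\pi u)$ reduces the problem to identifying
\[
\frac{1}{2\pi i}\int_{L'}\frac{\pi\,z^{-u}}{\sin(\pi u)}\,du.
\]
This is the classical Mellin inversion formula for $1/(1+z)$: since $\mathcal{M}\{1/(1+z);u\}=\pi/\sin(\pi u)$ for $0<\Re(u)<1$, the inverse transform returns $1/(1+z)$. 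Hence
\[
H^{1,1}_{1,1}\!\left(z\left|\begin{array}{c}(\rho/\alpha,1)\\(\rho/\alpha,1)\end{array}\right.\right)=\frac{z^{\rho/\alpha}}{1+z}.
\]

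Finally, setting $z=bx^{\alpha}$ and using $(bx^{\alpha})^{\rho/\alpha}=b^{\rho/\alpha}x^{\rho}$ yields
\[
H^{1,1}_{1,1}\!\left(bx^{\alpha}\left|\begin{array}{c}(\rho/\alpha,1)\\(\rho/\alpha,1)\end{array}\right.\right)=\frac{b^{\rho/\alpha}x^{\rho}}{1+bx^{\alpha}},
\]
and multiplication by $b^{-\rho/\alpha}$ gives the claimed identity. The main subtlety, and the only point requiring care, is justifying the contour deformation in the shift $u=\rho/\alpha+s$ and checking that the restriction $|\arg(b)|<\pi$ together with $\rho\geq 0$, $\alpha>0$ keeps the contour in the analyticity strip $0<\Re(u)<1$ of the Mellin transform of $1/(1+z)$; once this is in place, the rest of the argument is a routine manipulation of gamma functions.
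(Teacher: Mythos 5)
Your proof is correct, but it takes a genuinely different route from the paper. The paper's proof is essentially a two-step citation: it invokes Kiryakova's tabulated identity $\frac{x^{\rho}}{1+bx^{\alpha}}=b^{-\rho/\alpha}\,G^{1,1}_{1,1}\left(bx^{\alpha}\,\middle|\,\begin{smallmatrix}\rho/\alpha\\ \rho/\alpha\end{smallmatrix}\right)$ for the Meijer $G$-function, converts $G^{1,1}_{1,1}$ to $H^{1,1}_{1,1}$ via the standard relation $G^{m,n}_{p,q}(z\,|\,a_p;b_q)=H^{m,n}_{p,q}(z\,|\,(a_p,1);(b_q,1))$, and then checks the existence condition $|\arg(b)|<\pi$ by computing the parameter $\sigma=2>0$ in the criterion $|\arg z|<\pi\sigma/2$. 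You instead derive the identity from first principles: you unpack the Mellin--Barnes integral defining $H^{1,1}_{1,1}$, shift the integration variable to pull out $z^{\rho/\alpha}$, apply the reflection formula, and recognize the Mellin inversion of $\pi/\sin(\pi u)=\mathcal{M}\{1/(1+z);u\}$. Your computation is sound --- the contour for the $H$-function separates the poles of $\Gamma(\rho/\alpha+s)$ at $s=-\rho/\alpha-k$ from those of $\Gamma(1-\rho/\alpha-s)$ at $s=1-\rho/\alpha+k$, so after the translation $u=\rho/\alpha+s$ it lands exactly in the strip $0<\Re(u)<1$; no deformation is actually needed, only a linear change of variable, so the ``subtlety'' you flag at the end is less delicate than you suggest. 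The one point worth stating explicitly is that the Mellin inversion gives $1/(1+z)$ for $z>0$ and extends to $|\arg z|<\pi$ by analytic continuation (the Barnes integral converges there since $1/|\sin(\pi u)|$ decays like $e^{-\pi|\Im u|}$), which is precisely where $|\arg(b)|<\pi$ enters, together with the branch identity $(bx^{\alpha})^{\rho/\alpha}=b^{\rho/\alpha}x^{\rho}$. Your argument is self-contained and explains \emph{why} the identity holds, at the cost of handling the contour and branch issues yourself; the paper's argument is shorter but rests entirely on an external table.
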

\begin{proof}
According to (\cite[(C.12)]{virginia}),
\begin{equation}
\frac{x^{\rho{}}}{1+bx^{\alpha{}}}=b^{-\frac{\rho{}}{\alpha{}}}G_{1,1}^{1,1}\left(bx^{\alpha{}}\left|\begin{array}{
cc}
\frac{\rho{}}{\alpha{}} \\
\frac{\rho{}}{\alpha{}}
\end{array}\right.\right),
\end{equation}
where $ G^{m,n}_{p,q}(z) $ is the $ G- $function. Since a G-function and a Fox H-function are related by
\[
G^{m,n}_{p,q}\left(z\left|\begin{array}{
cc}
a_{p} \\
b_{q}
\end{array}\right.\right)=H^{m,n}_{p,q}\left(z\left|\begin{array}{
cc}
(a_{p},1) \\
(b_{q},1)
\end{array}\right.\right),
\]
then the identical formula follows. The condition on the argument $ b $, $ \left|\arg(b)\right|<\pi $, provides the existence of the Fox $ H- $function (\cite[(1.20)]{saxena}) which is
\[
\text{if}~\sigma>0~\text{and}~\left|\arg(z)\right|<\dfrac{\pi \sigma}{2},~\text{then the Fox}~H-\text{function}~\text{exists for all}~z\neq{0}
\]
where
\[
\sigma{}=\sum_{j=1}^nA_j-\sum_{j=n+1}^pA_j+\sum_{j=1}^mB_j-\sum_{j=m+1}^qB_j.
\]
If $z=bx^{\alpha{}},~z\neq{0}$ as $ x\in{(0,\infty)} $ and $ b\in{\mathbb{C}-\left\{0\right\}} $, then $ \left|\arg(z)\right|=\left|\arg(bx^{\alpha{}})\right|=\left|\arg(b)\right| $. In our case $\sigma{}=2>0$, and therefore $ \left|\arg(b)\right|<\pi $.

\end{proof}

\subsection{Dirac-delta Potential}

Suppose a particle is moving in a Dirac-delta potential $ V(x)=-\gamma\delta(x)(\gamma>0) $. Then the space fractional equation (\ref{2sfse}) takes the form
\begin{equation}\label{dfse}
C_{\alpha{}}D_{\theta{}}^{\alpha{}}\phi(x)-\gamma{}\delta{}(x)\phi(x)=E\phi(x).
\end{equation}
\begin{thm}
If $ 1<\alpha\leq{2} $ and $ \left| \theta \right| \leq{\min\left\{\alpha,2-\alpha\right\}} $, then for $ x\neq{0} $ the solution of the equation (\ref{dfse}) has the form
\begin{multline*}
\phi(x)=
\frac{-\pi{}\gamma{}k}{{\left(2\pi{}\hslash{}\right)}^2\alpha{}E}{\left(\frac{C_{\alpha{}}}{-E}\right)}^{\frac{-1}{\alpha{}}}\times\\
\left\{e^{\frac{-i\theta{}\pi{}}{2\alpha{}}}H_{2,3}^{2,1}\left[{\left\vert{}x\right\vert{}\left(\frac{{\hslash{}}^{\alpha{}}e^{i\theta{}\pi{}/2}C_{\alpha{}}}{-E}\right)}^{-1/\alpha{}}\left|\begin{array}{
ll}
\left(\frac{\alpha{}-1}{\alpha{}},\frac{1}{\alpha{}}\right),\left(\frac{1}{2},\frac{1}{2}\right)
\\
\left(0,1\right),\left(\frac{\alpha{}-1}{\alpha{}},\frac{1}{\alpha{}}\right),\left(\frac{1}{2},\frac{1}{2}\right)
\end{array}\right.\right]\right.+\\
\left.e^{\frac{i\theta{}\pi{}}{2\alpha{}}}H_{2,3}^{2,1}\left[{\left\vert{}x\right\vert{}\left(\frac{{\hslash{}}^{\alpha{}}e^{-i\theta{}\pi{}/2}C_{\alpha{}}}{-E}\right)}^{-1/\alpha{}}\left|\begin{array}{
ll}
\left(\frac{\alpha{}-1}{\alpha{}},\frac{1}{\alpha{}}\right),\left(\frac{1}{2},\frac{1}{2}\right)
\\
\left(0,1\right),\left(\frac{\alpha{}-1}{\alpha{}},\frac{1}{\alpha{}}\right),\left(\frac{1}{2},\frac{1}{2}\right)
\end{array}\right.\right]\right\}-i\frac{\gamma{}k\sqrt{\pi{}}}{{2\left(2\pi{}\hslash{}\right)}^2\alpha{}E}\times\\
{\left(\frac{C_{\alpha{}}}{-E}\right)}^{\frac{-1}{\alpha{}}}\left\{e^{\frac{-i\theta{}\pi{}}{2\alpha{}}}\right.H_{1,3}^{2,1}\left[\left\vert{}x\right\vert{}{\left(\frac{2^{\alpha{}}{\hslash{}}^{\alpha{}}C_{\alpha{}}e^{i\theta{}\pi{}/2}}{-E}\right)}^{-1/\alpha{}}\left|\begin{array}{
ll}
\left(\frac{\alpha{}-1}{\alpha{}},\frac{1}{\alpha{}}\right) \\
\left(\frac{1}{2},\frac{1}{2}\right),\left(\frac{\alpha{}-1}{\alpha{}},\frac{1}{\alpha{}}\right),\left(0,\frac{1}{2}\right)
\end{array}\right.\right]\\
\left.{-\
e^{\frac{i\theta{}\pi{}}{2\alpha{}}}H}_{1,3}^{2,1}\left[\left\vert{}x\right\vert{}{\left(\frac{2^{\alpha{}}{\hslash{}}^{\alpha{}}C_{\alpha{}}e^{-i\theta{}\pi{}/2}}{-E}\right)}^{-1/\alpha{}}\left|\begin{array}{
ll}
\left(\frac{\alpha{}-1}{\alpha{}},\frac{1}{\alpha{}}\right) \\
\left(\frac{1}{2},\frac{1}{2}\right),\left(\frac{\alpha{}-1}{\alpha{}},\frac{1}{\alpha{}}\right),\left(0,\frac{1}{2}\right)
\end{array}\right.\right]\right\},
\end{multline*}

where
\[
k=2\pi\hslash~\mathcal{F}\left\{\delta(x)\phi(x); p\right\}.
\]
\end{thm}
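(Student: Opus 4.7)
The plan is to reduce (\ref{dfse}) to an algebraic equation in the momentum representation and then invert back. Applying the Fourier transform (\ref{ft}) and using (\ref{qrffd}), the fractional-derivative term becomes $C_\alpha\eta_\alpha^\theta\hat\phi(p)$; the potential term becomes the $p$-independent constant $-\gamma k/(2\pi\hslash)$ by the very definition of $k$; and the right-hand side is $E\hat\phi(p)$. Solving the resulting linear equation gives
\[
\hat\phi(p)\;=\;\frac{\gamma k}{2\pi\hslash\bigl(C_\alpha|p|^\alpha e^{iSgn(p)\theta\pi/2}-E\bigr)}.
\]

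Next, I would apply the inverse Fourier transform (\ref{ift}), split the integral at $p=0$, and substitute $p\mapsto -p$ in the negative branch; this replaces $e^{iSgn(p)\theta\pi/2}$ by $e^{-i\theta\pi/2}$ on that side. Writing the resulting two complex exponentials in terms of $\cos$ and $\sin$ and grouping by denominators yields
\[
\phi(x)=\frac{\gamma k}{(2\pi\hslash)^2}\!\int_0^\infty\!\!\cos(p|x|/\hslash)\bigl[A_+^{-1}+A_-^{-1}\bigr]dp+\frac{i\gamma k}{(2\pi\hslash)^2}\!\int_0^\infty\!\!\sin(p|x|/\hslash)\bigl[A_+^{-1}-A_-^{-1}\bigr]dp,
\]
where $A_\pm(p):=C_\alpha p^\alpha e^{\pm i\theta\pi/2}-E$. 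This decomposition matches exactly the split in the statement: the cosine block will produce the two $H^{2,1}_{2,3}$ summands and the sine block the two $H^{2,1}_{1,3}$ summands, with the $\pm$ in $A_+^{-1}\pm A_-^{-1}$ responsible for the combination $e^{-i\theta\pi/(2\alpha)}H\pm e^{i\theta\pi/(2\alpha)}H$ appearing in each block.

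The main work is then to evaluate each of the four scalar integrals
\[
\int_0^\infty\frac{\cos(p|x|/\hslash)}{A_\pm(p)}\,dp\qquad\text{and}\qquad\int_0^\infty\frac{\sin(p|x|/\hslash)}{A_\pm(p)}\,dp
\]
as a single Fox $H$-function. The strategy is to first extract the factor $-1/E$ and apply Lemma 3.1 with $\rho=0$ and $b=-C_\alpha e^{\pm i\theta\pi/2}/E$ to rewrite $A_\pm(p)^{-1}$ as a prefactor --- which is the source of the phase $e^{\mp i\theta\pi/(2\alpha)}$ and of the factor $(C_\alpha/(-E))^{-1/\alpha}$ that appear outside the $H$-functions --- times an $H^{1,1}_{1,1}$. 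Then one substitutes the standard Mellin-Barnes representations of $\cos$ and $\sin$ as Fox $H$-functions, which is where the entries $(1/2,1/2)$, the factor $\sqrt{\pi}$ preceding the sine block, and the $2^\alpha$ in the sine-block arguments originate. The Mellin convolution formula for products of two $H$-functions then collapses each integral into a single $H^{2,1}_{2,3}$ or $H^{2,1}_{1,3}$; summing the four pieces with the correct signs from $A_+^{-1}\pm A_-^{-1}$ reproduces the stated expression.

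The principal obstacle is this last Mellin-Barnes bookkeeping: one must verify that the parameter triples emerging from the convolution align exactly with $((\alpha-1)/\alpha,1/\alpha)$, $(1/2,1/2)$ and $(0,1)$ in the correct positions, and that the $\alpha$-th-root substitutions on the integration variable combine cleanly with the complex phases $e^{\pm i\theta\pi/2}$ to produce the arguments $|x|(\hslash^\alpha e^{\pm i\theta\pi/2}C_\alpha/(-E))^{-1/\alpha}$ and their analogues with the extra factor $2^\alpha$ in the sine block. The manipulations are entirely routine, but tracking the signs, phases and fractional exponents is delicate.
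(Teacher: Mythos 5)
Your proposal follows essentially the same route as the paper's proof: apply the Fourier transform to get $\hat\phi(p)=\gamma k/\bigl(2\pi\hslash\,(C_\alpha|p|^\alpha e^{iSgn(p)\theta\pi/2}-E)\bigr)$, invert and split into cosine and sine integrals over $(0,\infty)$ with the $A_\pm^{-1}$ combinations, use Lemma 3.1 to write each denominator as an $H^{1,1}_{1,1}$, and then collapse the integrals via the cosine/sine transforms of the Fox $H$-function and standard $H$-function identities. The only presentational difference is that the paper cites the tabulated sine/cosine transform formulas for the $H$-function directly instead of re-deriving them through the Mellin--Barnes convolution, so the approaches coincide in substance.
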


\begin{proof}
Taking into account (\ref{qrffd}), the application of the Fourier transform (\ref{ft}) to (\ref{dfse}) leads to
\begin{equation}
\hat{\phi}(p)=\
\frac{\gamma{}k}{2\pi{}\hslash{}}\frac{1}{\left(C_{\alpha{}}{\left\vert{}p\right\vert{}}^{\alpha{}}e^{iSgn\left(p\right)\theta{}\pi{}/2}-E\right)},
\end{equation}
and thus
\begin{equation}\label{ii}
\phi(x)=\frac{\gamma{}k}{{\left(2\pi{}\hslash{}\right)}^2}\left(I_1+iI_2\right),
\end{equation}
where
\[
I_1=\int_{-\infty{}}^{\infty{}}\frac{\cos{\left(px/\hslash{}\right)}}{\left(C_{\alpha{}}{\left\vert{}p\right\vert{}}^{\alpha{}}e^{iSgn\left(p\right)\theta{}\pi{}/2}-E\right)}\
dp,
\]
\[
I_2=\int_{-\infty{}}^{\infty{}}\frac{\sin{\left(px/\hslash{}\right)}}{\left(C_{\alpha{}}{\left\vert{}p\right\vert{}}^{\alpha{}}e^{iSgn\left(p\right)\theta{}\pi{}/2}-E\right)}\
dp.
\]
If $ x>0 $, by Lemma 3.1. it is possible to see that
\begin{multline*}
I_1=\frac{-1}{E}\left\{\int_0^{\infty{}}\
\cos{\left(px/\hslash{}\right)}H_{1,1}^{1,1}\left(-p^{\alpha{}}\frac{C_{\alpha{}}e^{-i\theta{}\pi{}/2}}{E}\left|\begin{array}{
cc}
(0,1) \\
(0,1)
\end{array}\right.\right)dp\right.\\
+\left.\int_0^{\infty{}}\cos{\left(px/\hslash{}\right)}H_{1,1}^{1,1}\left(-p^{\alpha{}}\frac{C_{\alpha{}}e^{i\theta{}\pi{}/2}}{E}\left|\begin{array}{
cc}
(0,1) \\
(0,1)
\end{array}\right.\right)\ dp\right\},
\end{multline*}
\begin{multline*}
I_2=\frac{-1}{E}\left\{\int_0^{\infty{}}\
\sin{\left(px/\hslash{}\right)}H_{1,1}^{1,1}\left(-p^{\alpha{}}\frac{C_{\alpha{}}e^{i\theta{}\pi{}/2}}{E}\left|\begin{array}{
cc}
(0,1) \\
(0,1)
\end{array}\right.\right)dp\right.\\
-\left.\int_0^{\infty{}}\sin{\left(px/\hslash{}\right)}H_{1,1}^{1,1}\left(-p^{\alpha{}}\frac{C_{\alpha{}}e^{-i\theta{}\pi{}/2}}{E}\left|\begin{array}{
cc}
(0,1) \\
(0,1)
\end{array}\right.\right)\ dp\right\}.
\end{multline*}
According to the formulas for cosine and sine transforms of the Fox $ H- $funcion (\cite[(17)]{hello2}) and (\cite[(2.49)]{saxena}), respectively, and by using the following properties of the Fox $ H- $function (\cite{saxena}):
\[
H^{m,n}_{p,q}\left(z\left|\begin{array}{c}
                            (a_{p},A_{p}) \\
                            (b_{q},B_{q})
                          \end{array}
\right.\right)=kH^{m,n}_{p,q}\left(z^{k}\left|\begin{array}{c}
                            (a_{p},kA_{p}) \\
                            (b_{q},kB_{q})
                          \end{array}
\right.\right);~~k>0,
\]
\[
H^{m,n}_{p,q}\left(z\left|\begin{array}{c}
                            (a_{p},A_{p}) \\
                            (b_{q},B_{q})
                          \end{array}
\right.\right)=H^{n,m}_{q,p}\left(\frac{1}{z}\left|\begin{array}{c}
                            (1-b_{q},B_{q}) \\
                            (1-a_{p},A_{p})
                          \end{array}
\right.\right),
\]
\[
z^{\sigma}H^{m,n}_{p,q}\left(z\left|\begin{array}{c}
                            (a_{p},A_{p}) \\
                            (b_{q},B_{q})
                          \end{array}
\right.\right)=H^{m,n}_{p,q}\left(z\left|\begin{array}{c}
                            (a_{p}+\sigma{A_{p}},A_{p}) \\
                            (b_{q}+\sigma{B_{q}},B_{q})
                          \end{array}
\right.\right);~~\sigma\in\mathbb{C},
\]
we can get
\begin{multline*}
I_1=
\frac{-\pi{}}{\alpha{}E}{\left(\frac{C_{\alpha{}}}{-E}\right)}^{\frac{-1}{\alpha{}}}\times\\
\left\{e^{\frac{-i\theta{}\pi{}}{2\alpha{}}}H_{2,3}^{2,1}\left[{x\left(\frac{{\hslash{}}^{\alpha{}}e^{i\theta{}\pi{}/2}C_{\alpha{}}}{-E}\right)}^{-1/\alpha{}}\left|\begin{array}{
ll}
\left(1-\left(1/\alpha\right),1/\alpha{}\right),\left(1/2,1/2\right)
\\
\left(0,1\right),\left(1-\left(1/\alpha\right),1/\alpha{}\right),\left(1/2,1/2\right)
\end{array}\right.\right]\right. \\
+e^{\frac{i\theta{}\pi{}}{2\alpha{}}}H_{2,3}^{2,1}\left.\left[{x\left(\frac{{\hslash{}}^{\alpha{}}e^{-i\theta{}\pi{}/2}C_{\alpha{}}}{-E}\right)}^{-1/\alpha{}}\left|\begin{array}{
ll}
\left(1-\left(1/\alpha\right),1/\alpha{}\right),\left(1/2,1/2\right)
\\
\left(0,1\right),\left(1-\left(1/\alpha\right),1/\alpha{}\right),\left(1/2,1/2\right)
\end{array}\right.\right]\right\},
\end{multline*}
and
\begin{multline*}
I_2=\frac{-\sqrt{\pi{}}}{2\alpha{}E}{\left(\frac{C_{\alpha{}}}{-E}\right)}^{\frac{-1}{\alpha{}}}\times\\
\left\{e^{\frac{-i\theta{}\pi{}}{2\alpha{}}}H_{1,3}^{2,1}\left[x{\left(\frac{2^{\alpha{}}{\hslash{}}^{\alpha{}}C_{\alpha{}}e^{i\theta{}\pi{}/2}}{-E}\right)}^{-1/\alpha{}}\left|\begin{array}{
ll}
\left(1-\left(1/\alpha\right),1/\alpha{}\right) \\
\left(1/2,1/2\right),\left(1-\left(1/\alpha\right),1/\alpha{}\right),(0,1/2)
\end{array}\right.\right]\right.\\
-\left.{e}^{\frac{i\theta{}\pi{}}{2\alpha{}}}H_{1,3}^{2,1}\left[x{\left(\frac{2^{\alpha{}}{\hslash{}}^{\alpha{}}C_{\alpha{}}e^{-i\theta{}\pi{}/2}}{-E}\right)}^{-1/\alpha{}}\left|\begin{array}{
ll}
\left(1-\left(1/\alpha\right),1/\alpha{}\right) \\
\left(1/2,1/2\right),\left(1-\left(1/\alpha\right),1/\alpha{}\right),(0,1/2)
\end{array}\right.\right]\right\}.
\end{multline*}\\
The substitution of these expressions for $ I_1 $ and $ I_2 $ into (\ref{ii}) confirms the validity of the theorem.\\
The case $ x<0 $ can be considered similarly in order to accomplish the proof.
\end{proof}

\raggedright{Taking into account (\ref{rem1}) and (\ref{rem2}) the following assertions follow.}

\begin{cor}(\cite{dong1})
If $ 1<\alpha\leq{2} $ and $ \theta=0 $, the solution of the equation (\ref{dfse}) for $ x\neq{0} $ has the form
\begin{equation}
\phi(x)={{\xi{}}_0H}_{2,3}^{2,1}\left[{\left\vert{}x\right\vert{}\left(\frac{{\hslash{}}^{\alpha{}}D_{\alpha{}}}{-E}\right)}^{-1/\alpha{}}\left|\begin{array}{
ll}
\left(\frac{\alpha{}-1}{\alpha{}},\frac{1}{\alpha{}}\right),\left(\frac{1}{2},\frac{1}{2}\right)
\\
\left(0,1\right),\left(\frac{\alpha{}-1}{\alpha{}},\frac{1}{\alpha{}}\right),\left(\frac{1}{2},\frac{1}{2}\right)
\end{array}\right.\right],
\end{equation}
where
\[
{\xi{}}_0=\left(\frac{-\gamma{}k}{2\pi{}{\hslash{}}^2E\alpha{}}\right){\left(\frac{D_{\alpha{}}}{-E}\right)}^{\frac{-1}{\alpha{}}}.\
\]
\end{cor}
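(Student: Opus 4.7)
The plan is to specialize the formula in Theorem 3.1 to the case $\theta = 0$, while invoking the equivalence of the two formulations recorded in equations (\ref{rem1}) and (\ref{rem2}) in order to re-express the answer in terms of the fractional diffusion coefficient $D_{\alpha}$ rather than the constant $C_{\alpha}$.

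First, I would substitute $\theta = 0$ directly into the long expression from Theorem 3.1. All exterior phase factors $e^{\pm i\theta\pi/(2\alpha)}$ collapse to $1$, and the interior phases $e^{\pm i\theta\pi/2}$ sitting inside the arguments of the Fox $H$-functions likewise disappear. Consequently, the two $H^{2,1}_{2,3}$ terms making up the $I_{1}$-block become identical and add to $2 H^{2,1}_{2,3}\bigl[\,|x|(\hslash^{\alpha}C_{\alpha}/(-E))^{-1/\alpha}\,|\,\cdots\,\bigr]$ with precisely the parameter list displayed in the corollary (noting $(\alpha-1)/\alpha = 1-1/\alpha$).

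Second, by the same substitution the two $H^{2,1}_{1,3}$ terms constituting the $I_{2}$-block also coincide, but they enter with opposite signs and therefore cancel identically. Hence the imaginary contribution $iI_{2}$ vanishes at $\theta=0$, which is consistent with the $x \mapsto -x$ symmetry restored to the problem once skewness is removed; only the cosine-transform piece survives, in agreement with the single Fox $H$-term appearing in the corollary.

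Third, I would translate the surviving expression from the $C_{\alpha}$-parametrization of (\ref{dfse}) to the $D_{\alpha}$-parametrization used in the corollary via the equivalence (\ref{rem1})--(\ref{rem2}); at the level of Fourier symbols this amounts to the formal replacement $C_{\alpha} \to D_{\alpha}$ inside the argument of the Fox $H$-function. Folding the factor of $2$ produced by the doubling of the $I_{1}$-block into the outer prefactor $-\pi\gamma k/[(2\pi\hslash)^{2}\alpha E]$ and simplifying $(2\pi\hslash)^{2} = 4\pi^{2}\hslash^{2}$ yields the multiplicative constant $\xi_{0} = \bigl(-\gamma k/(2\pi\hslash^{2}E\alpha)\bigr)(D_{\alpha}/(-E))^{-1/\alpha}$ stated in the corollary. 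I do not anticipate any substantive obstacle: the only care required is the bookkeeping of these constants and the verification that the $I_{2}$-cancellation is exact, both of which are essentially mechanical once $\theta=0$ is substituted.
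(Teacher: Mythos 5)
Your proposal is correct and follows exactly the route the paper intends: the paper gives no explicit proof of this corollary beyond the remark that it ``follows'' from the Dirac-delta theorem together with the equivalence of (\ref{rem1}) and (\ref{rem2}), and your specialization $\theta=0$ --- doubling of the two identical $H^{2,1}_{2,3}$ terms, exact cancellation of the $H^{2,1}_{1,3}$ difference, the relabeling $C_{\alpha}\to D_{\alpha}$, and the collapse of $2\pi/(2\pi\hslash)^{2}$ to $1/(2\pi\hslash^{2})$ --- is precisely the omitted bookkeeping. No gaps.
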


\begin{cor}(\cite{griffiths})
If $ \alpha=2 $ and $ \theta=0 $ , the standard wave function solution of the equation (\ref{dfse}) for $ x\neq{0} $ has the form
\[
\phi(x)={{\lambda}H}_{0,1}^{1,0}\left(\left\vert{}x\right\vert{}\frac{\sqrt{-2mE}}{\hslash{}}\left|\begin{array}{
cc}
\\
\left(0,1\right)
\end{array}\right.\right)={\lambda}e^{-\left\vert{}x\right\vert{}{\sqrt{-2mE}}/\hslash{}}\,\,\,\,(\lambda~is~a~constant).
\]
\end{cor}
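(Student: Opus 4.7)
The plan is to derive Corollary 3.3 by specializing Theorem 3.1 (or, equivalently, Corollary 3.2) to $\alpha = 2$ and $\theta = 0$, and then collapsing the resulting Fox $H$-functions by means of the standard parameter-cancellation reductions so that the familiar exponential decay of the standard Dirac-delta bound state emerges.

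I would begin by setting $\theta = 0$, so that every phase factor $e^{\pm i \theta \pi/(2\alpha)}$ and $e^{\pm i \theta \pi/2}$ becomes unity. With that substitution, the two Fox $H$-functions appearing inside the $I_1$-block have identical arguments and are added, while the two appearing inside the $I_2$-block have identical arguments and are subtracted, hence cancel exactly. This is structurally consistent with the fact that, for $\theta = 0$, the integrand of $I_2$ is odd in $p$ and therefore integrates to $0$ over $\mathbb{R}$. So only the contribution of $I_1$ survives, reducing the formula of Theorem 3.1 to a single $H_{2,3}^{2,1}$ (up to an overall constant).

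Next I would set $\alpha = 2$. Then $\bigl(\frac{\alpha-1}{\alpha}, \frac{1}{\alpha}\bigr) = (1/2, 1/2)$, which coincides with the $(1/2, 1/2)$-pair already occurring in both the top and the bottom rows of the $H$-function. I would apply the Fox $H$-function cancellation rule: a pair $(a_j, A_j)$ with $j > n$ matching a pair $(b_k, B_k)$ with $k \le m$ may be removed, lowering $H^{m,n}_{p,q}$ to $H^{m-1,n}_{p-1,q-1}$; similarly for $(a_j, A_j)$ with $j \le n$ matching a $(b_k, B_k)$ with $k > m$. Two successive applications of this rule strip away the two copies of $(1/2, 1/2)$ and bring $H_{2,3}^{2,1}$ down to $H_{0,1}^{1,0}\!\bigl[z \mid (0,1)\bigr]$, which is the well-known Mellin--Barnes representation of $e^{-z}$.

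It then remains only to read off the argument. Using Corollary 3.2 as the intermediate step, the argument $|x|\bigl(\hslash^{\alpha} D_\alpha/(-E)\bigr)^{-1/\alpha}$ with $\alpha = 2$ and $D_2 = 1/(2m)$ simplifies directly to $|x| \sqrt{-2mE}/\hslash$. Absorbing the overall numerical factor $\xi_0$ (together with $\gamma, k, E$ and the normalization constants from Theorem 3.1) into a single constant $\lambda$ delivers the stated closed form.

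The main obstacle is the careful bookkeeping of the two $H$-function reductions: one must confirm that each of the coinciding $(1/2,1/2)$ pairs sits on complementary sides of the dividing indices $m$ and $n$ so that the cancellation rule legitimately applies (first cancelling an $a_j$ with $j > n$ against a $b_k$ with $k \le m$, then cancelling an $a_j$ with $j \le n$ against a $b_k$ with $k > m$). Once the two reductions are certified, the identification $H_{0,1}^{1,0}[z \mid (0,1)] = e^{-z}$ and the elementary simplification of the argument are routine.
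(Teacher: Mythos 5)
Your proposal is correct and follows essentially the same route the paper intends: the paper offers no explicit proof of this corollary, merely asserting that it follows from Theorem 3.2 and Corollary 3.2 via the identifications (\ref{rem1})--(\ref{rem2}), and your specialization $\theta=0$ (killing the $I_2$ block and doubling the $I_1$ block) followed by $\alpha=2$ and the two legitimate $(1/2,1/2)$ cancellations reducing $H^{2,1}_{2,3}$ to $H^{1,0}_{0,1}[z\,\vert\,(0,1)]=e^{-z}$ is exactly the computation being left implicit. Your check that the matching pairs sit on complementary sides of the indices $m$ and $n$, and the evaluation of the argument as $\left\vert x\right\vert\sqrt{-2mE}/\hslash$ using $D_2=1/2m$, supply the details the paper omits.
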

\subsection{Linear Potential}

Consider a particle in a linear potential field
\[
V(x)=\left\{\begin{array}{ll}
Ax, & x\geq{0}\left(A>0\right) \\
\infty, & x<0
\end{array} \right..
\]
Then the space fractional equation (\ref{2sfse}) becomes
\begin{equation}\label{lfse}
C_{\alpha{}}D_{\theta{}}^{\alpha{}}\phi(x)+Ax\phi(x)=E\phi(x),~x\geq{}0\ .
\end{equation}
\begin{thm}
If $ 1<\alpha\leq{2} $ and $ \left| \theta \right| \leq{\min\left\{\alpha,2-\alpha\right\}} $, then the solution of the equation (\ref{lfse}) has the form
\begin{multline*}
\,\,\,\phi(x)=\\
\frac{2\pi{}N}{\left(\alpha{}+1\right)}H_{2,2}^{1,1}\left[\left(x-\frac{E}{A}\right)\frac{1}{\hslash{}}{\left(\frac{C_{\alpha{}}}{\hslash{}A\left(\alpha{}+1\right)}\right)}^{\frac{-1}{\alpha{}+1}}\left|\begin{array}{
ll}
\left(\frac{\alpha{}}{\alpha{}+1},\frac{1}{1+\alpha{}}\right),\left(\frac{2+\alpha{}-\theta{}}{2\left(\alpha{}+1\right)},\frac{\alpha{}+\theta{}}{2\left(\alpha{}+1\right)}\right)
\\
\left(0,1\right),\left(\frac{2+\alpha{}-\theta{}}{2\left(\alpha{}+1\right)},\frac{\alpha{}+\theta{}}{2\left(\alpha{}+1\right)}\right)
\end{array}\right.\right],
\end{multline*}
where
\[
N=\frac{1}{2\pi{}\hslash{}}{\left(\frac{C_{\alpha{}}}{A\hslash{}\left(\alpha{}+1\right)}\right)}^{\frac{-1}{(\alpha{}+1)}}.
\]
\end{thm}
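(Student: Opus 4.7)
The plan is to mirror the momentum-space strategy employed for the Dirac-delta case. Applying the Fourier transform (\ref{ft}) to equation (\ref{lfse}), using (\ref{qrffd}) together with the identity $\mathcal{F}\{x\phi(x);p\}=i\hslash\,\frac{d}{dp}\hat\phi(p)$ that follows from differentiating (\ref{ft}) in $p$, turns the space fractional equation into the first-order linear ODE
\begin{equation*}
iA\hslash\,\frac{d\hat\phi}{dp}+\bigl(C_\alpha\eta_\alpha^\theta-E\bigr)\hat\phi(p)=0.
\end{equation*}
This ODE is separable; integrating from $0$ to $p$ and using the definition (\ref{rfc}) of $\eta_\alpha^\theta$ separately on $p>0$ and $p<0$ gives
\begin{equation*}
\hat\phi(p)=N\exp\!\left(\frac{i}{A\hslash}\!\left[\frac{\mathrm{Sgn}(p)\,C_\alpha e^{i\mathrm{Sgn}(p)\theta\pi/2}}{\alpha+1}|p|^{\alpha+1}-Ep\right]\right).
\end{equation*}

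Next I would apply the inverse Fourier transform (\ref{ift}), absorbing the term linear in $p$ via the shift $\xi=x-E/A$, splitting the real line into $p>0$ and $p<0$, and substituting $p\mapsto -p$ on the negative half. The wave function then becomes
\begin{equation*}
\phi(x)=\frac{N}{2\pi\hslash}\bigl(I_{+}+I_{-}\bigr),\qquad I_\pm=\int_0^\infty\exp\!\bigl(\pm ip\xi/\hslash\pm iK_\pm p^{\alpha+1}\bigr)\,dp,
\end{equation*}
with $K_\pm=C_\alpha e^{\pm i\theta\pi/2}/[A\hslash(\alpha+1)]$; note that $I_-=\overline{I_+}$, so the sum is automatically real.

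To recognise $I_++I_-$ as a Fox $H$-function I would insert the Mellin--Barnes representation $e^{-z}=\frac{1}{2\pi i}\int_L\Gamma(s)\,z^{-s}\,ds$ for each factor $e^{\pm ip\xi/\hslash}$, interchange the order of integration, and evaluate the inner $p$-integral $\int_0^\infty p^{-s}e^{\pm iK_\pm p^{\alpha+1}}\,dp$ in closed form, which yields the gamma factor $\Gamma((1-s)/(\alpha+1))$ together with a complex power $(\mp iK_\pm)^{(s-1)/(\alpha+1)}$. Adding the two contour integrals, all the $\theta$- and $\mathrm{Sgn}$-dependent phases collapse into the single factor $\cos\!\bigl[\pi\bigl((1-\theta)+(\alpha+\theta)s\bigr)/(2(\alpha+1))\bigr]$ under the integrand; the reflection identity $\Gamma(\tfrac12+z)\Gamma(\tfrac12-z)=\pi/\cos(\pi z)$ then rewrites this cosine as the reciprocal product $\Gamma(a_2+A_2 s)\,\Gamma(1-b_2-B_2 s)$ with precisely $a_2=b_2=(2+\alpha-\theta)/(2(\alpha+1))$ and $A_2=B_2=(\alpha+\theta)/(2(\alpha+1))$. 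Reading the four Gammas $\Gamma(s)$, $\Gamma((1-s)/(\alpha+1))$, $\Gamma(a_2+A_2 s)$, $\Gamma(1-b_2-B_2 s)$ off the Mellin--Barnes integrand yields exactly the $H^{1,1}_{2,2}$ displayed in the statement.

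The ODE step is routine; the delicate part is the last one. The phase factors $e^{\pm i\theta\pi/2}$ entering through $K_\pm$, the complex powers $(\mp iK_\pm)^{1/(\alpha+1)}$ arising from the inner $p$-integral, and the residual $e^{\pm i\pi s/2}$ coming from the Mellin inversion of $e^{\pm ip\xi/\hslash}$ must all be tracked so that the $\theta$-dependence of the combined integrand collapses into exactly the one cosine above; a miscalculation of any one of these phases would leave residual exponentials that are not absorbed by the reflection identity and would spoil the clean $H^{1,1}_{2,2}$ structure.
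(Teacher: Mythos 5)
Your proposal is correct and follows essentially the same route as the paper: the same momentum-space ODE, the same split into $p>0$ and $p<0$, the same gamma-function evaluation of the inner integral $\int_0^\infty p^{-s}e^{\pm iK_\pm p^{\alpha+1}}\,dp$, the same cosine $\cos\bigl[\pi((1-\theta)+(\alpha+\theta)s)/(2(\alpha+1))\bigr]$ converted by the reflection identity into the two denominator Gammas, and the same identification of the resulting Mellin--Barnes integral as the stated $H^{1,1}_{2,2}$. The only difference is presentational (the paper rescales to variables $w,y$ and phrases the last step as a Mellin transform of $\phi_1,\phi_2$ rather than as inserting the Mellin--Barnes kernel and interchanging integrals), which changes nothing of substance.
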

\begin{proof}
According to (\ref{qrffd}) and the formula (\cite{dong1})
\[
\mathcal{F}\
\left\{x\phi(x);p\right\}=i\hslash{}\frac{d}{dp}\hat{\phi}(p),
\]
the application of the Fourier transform (\ref{ft}) to the equation (\ref{lfse}) leads to
\[
\frac{d\hat{\phi}(p)}{\hat{\phi}(p)}=\frac{1}{Ai\hslash{}}\left(E-C_{\alpha{}}{\left\vert{}p\right\vert{}}^{\alpha{}}e^{iSgn(p)\theta{}\pi/2}\right),
\]
from where it follows readily that (omitting the constant of the integration)
\[
\hat{\phi}(p)=\left\{\begin{array}{l}\begin{array}{
ll}
\exp\left[\frac{-i}{A\hslash{}}\left(Ep-\frac{C_{\alpha{}}}{\alpha{}+1}p^{\alpha{}+1}e^{i\theta{}\pi/2}\right)\
\right]; & p>0
\end{array} \\
\begin{array}{
ll}
\exp\left[\frac{-i}{A\hslash{}}\left(Ep+\frac{C_{\alpha{}}}{\alpha{}+1}{\left\vert{}p\right\vert{}}^{\alpha{}+1}e^{-i\theta{}\pi/2}\right)\
\right]; &  p<0
\end{array}\end{array}.\right.
\]
Setting
\begin{equation}\label{subs}
w=p{\left(\frac{C_{\alpha{}}}{A\hslash{}\left(\alpha{}+1\right)}\right)}^{\frac{1}{(\alpha{}+1)}},~~y=\frac{1}{\hslash{}}\left(x-\frac{E}{A}\right){\left(\frac{C_{\alpha{}}}{A\hslash{}\left(\alpha{}+1\right)}\right)}^{\frac{-1}{(\alpha{}+1)}},
\end{equation}
it is possible by the application of the inverse Fourier transform (\ref{ift}) to get that
\begin{equation}\label{addphis}
\phi(y)=N\left\{{\phi{}}_1(y)+{\phi{}}_2(y)\right\},
\end{equation}
where
\[
{\phi}_1(y)=\int_0^{\infty{}}e^{iyw}e^{i{e^{i\theta\pi/2}w}^{\alpha{}+1}}\
\ dw,
\]
and
\[
{\phi}_2(y)=\int_{-\infty{}}^0\
e^{iyw}e^{-i{e^{-i\theta\pi/2}\left\vert{}w\right\vert{}}^{\alpha{}+1}}\
dw.
\]
Denote by $ \hat{\phi}(s)=\mathcal{M}\left\{\phi(y); s\right\} $ the Mellin transform of $ \phi(y) $. From the formula (\cite[Ch.8]{it})
\begin{equation}\label{mexp}
\mathcal{M}\left\{e^{-i\rho{x}};s\right\}=\left(i\rho\right)^{-s}\Gamma(s)={\rho}^{-s}\Gamma(s)\left(\cos\left(\frac{\pi{s}}{2}\right)-i\sin\left(\frac{\pi{s}}{2}\right)\right);~~\rho\in{\mathbb{C}},
\end{equation}
it follows that
\begin{equation}\label{1tilde}
{\tilde{\phi}}_1(s)={\left(-i\right)}^{-s}\Gamma\left(s\right)\int_0^{\infty{}}{e}^{ie^{i\theta\pi/2}w^{\alpha{}+1}}w^{-s}dw,
\end{equation}
and
\begin{equation}\label{2tilde}
{\tilde{\phi}}_2(s)={\left(-i\right)}^{-s}\Gamma\left(s\right)\int_{-\infty{}}^0e^{-ie^{-i\theta\pi/2}{\left\vert{}w\right\vert{}}^{\alpha{}+1}}w^{-s}dw.
\end{equation}

Now by the formula (\ref{mexp}), the substitution $ u={w}^{\alpha+1} $ in (\ref{1tilde}) and the substitutions $ u=-w $, $ \xi={u}^{\alpha+1} $ in (\ref{2tilde}), it can be seen that
\[
{\tilde{\phi}}_1(s)=\frac{1}{\alpha{}+1}\Gamma\left(\frac{1-s}{1+\alpha{}}\right)\Gamma\left(s\right)\exp\left(\frac{i\pi{}}{2}\left[\frac{1-\theta{}}{\alpha{}+1}+\frac{\left(\alpha{}+\theta{}\right)s}{\alpha{}+1}\right]\right),
\]
and
\[
{\tilde{\phi}}_2(s)=\frac{1}{\alpha{}+1}\Gamma\left(\frac{1-s}{1+\alpha{}}\right)\Gamma\left(s\right)\exp\left(\frac{-i\pi{}}{2}\left[\frac{1-\theta{}}{\alpha{}+1}+\frac{\left(\alpha{}+\theta{}\right)s}{\alpha{}+1}\right]\right).
\]
These representations and the formula
\[
\cos\left(\pi{z}/2\right)=\frac{\pi{}}{\Gamma\left(\frac{1+z}{2}\right)\Gamma\left(\frac{1-z}{2}\right)}
\]
allow from (\ref{addphis}) to be obtained that
\begin{equation}
\tilde{\phi}(s)=\frac{2\pi{}N}{\left(\alpha{}+1\right)}\frac{\Gamma\left(s\right)\Gamma\left(\frac{1-s}{1+\alpha{}}\right)}{\Gamma\left(\frac{\alpha{}+\theta{}-\alpha{}s-\theta{}s}{2\left(\alpha{}+1\right)}\right)\Gamma\left(\frac{2+\alpha{}-\theta{}+\alpha{}s+\theta{}s}{2\left(\alpha{}+1\right)}\right)}.
\end{equation}
Therefore
\begin{equation}
\phi(y)=\frac{2\pi{}N}{\left(\alpha{}+1\right)}\frac{1}{2\pi{}i}\int_{\gamma{}-i\infty{}}^{\gamma{}+i\infty{}}\
\frac{\Gamma\left(s\right)\Gamma\left(\frac{1-s}{1+\alpha{}}\right)}{\Gamma\left(\frac{\alpha{}+\theta{}-\alpha{}s-\theta{}s}{2\left(\alpha{}+1\right)}\right)\Gamma\left(\frac{2+\alpha{}-\theta{}+\alpha{}s+\theta{}s}{2\left(\alpha{}+1\right)}\right)}y^{-s}ds,
\end{equation}
that is (\cite[Sec. 1.2]{saxena}),
\[
\phi(y)=\frac{2\pi{}N}{\left(\alpha{}+1\right)}H_{2,2}^{1,1}\left(y\left|\begin{array}{
ll}
\left(\frac{\alpha{}}{\alpha{}+1},\frac{1}{1+\alpha{}}\right),\left(\frac{2+\alpha{}-\theta{}}{2\left(\alpha{}+1\right)},\frac{\alpha{}+\theta{}}{2\left(\alpha{}+1\right)}\right)
\\
\left(0,1\right),\left(\frac{2+\alpha{}-\theta{}}{2\left(\alpha{}+1\right)},\frac{\alpha{}+\theta{}}{2\left(\alpha{}+1\right)}\right)
\end{array}\right.\right).
\]
Taking into account the substitutions (\ref{subs}), the validity of the theorem follows directly.
\end{proof}

\raggedright{Having in mind (\ref{rem1}) and (\ref{rem2}) the following assertions follow.}
\begin{cor}(\cite{dong1})
If $ 1<\alpha\leq{2} $ and $ \theta=0 $, the solution of the equation (\ref{lfse}) has the form
\begin{multline}\label{miss13}
\,\,\,\phi(x)=\\
\frac{2\pi{}N}{\left(\alpha{}+1\right)}H_{2,2}^{1,1}\left[\left(x-\frac{E}{A}\right)\frac{1}{\hslash{}}{\left(\frac{D_{\alpha{}}}{\hslash{}A\left(\alpha{}+1\right)}\right)}^{\frac{-1}{\alpha{}+1}}\left|\begin{array}{
ll}
\left(\frac{\alpha{}}{\alpha{}+1},\frac{1}{1+\alpha{}}\right),\left(\frac{2+\alpha{}}{2\left(\alpha{}+1\right)},\frac{\alpha{}}{2\left(\alpha{}+1\right)}\right)
\\
\left(0,1\right),\left(\frac{2+\alpha{}}{2\left(\alpha{}+1\right)},\frac{\alpha{}}{2\left(\alpha{}+1\right)}\right)
\end{array}\right.\right].
\end{multline}
\end{cor}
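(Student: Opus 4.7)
The plan is to reduce the space fractional equation (\ref{lfse}) to a first-order ordinary differential equation in momentum space, solve it explicitly, and then invert back to position space using a Mellin-transform representation that can be recognized as a Fox $H$-function.

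First I would apply the Fourier transform (\ref{ft}) to (\ref{lfse}). Using (\ref{qrffd}) on the fractional derivative term and the standard identity $\mathcal{F}\{x\phi(x);p\}=i\hslash\,\frac{d}{dp}\hat{\phi}(p)$ on the $Ax\phi(x)$ term, the equation collapses to a separable linear ODE
\[
\frac{d\hat{\phi}(p)}{\hat{\phi}(p)}=\frac{1}{Ai\hslash}\!\left(E-C_{\alpha}|p|^{\alpha}e^{i\,\mathrm{Sgn}(p)\theta\pi/2}\right)dp,
\]
whose solution splits naturally into two cases according to the sign of $p$ because of the sign-dependent phase factor. After integrating (and dropping the integration constant) one obtains a piecewise exponential for $\hat{\phi}(p)$.

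Next I would apply the inverse Fourier transform (\ref{ift}). The substitutions (\ref{subs}) are introduced at this stage in order to absorb the constants $C_\alpha$, $A$, $\hslash$, and $E$ into a single dimensionless variable $y$; this produces $\phi(y)=N\{\phi_1(y)+\phi_2(y)\}$, where $\phi_1$ and $\phi_2$ are oscillatory integrals over the positive and negative half-lines, respectively. The main obstacle is the evaluation of these oscillatory integrals. My strategy is to pass to the Mellin domain: applying $\mathcal{M}\{e^{-i\rho x};s\}=(i\rho)^{-s}\Gamma(s)$ from (\cite[Ch.~8]{it}) yields $\tilde\phi_1(s)$ and $\tilde\phi_2(s)$, and the substitution $u=w^{\alpha+1}$ (resp.\ $u=-w$ followed by $\xi=u^{\alpha+1}$) turns the remaining integral into another gamma function, so that
\[
\tilde\phi_j(s)=\frac{1}{\alpha+1}\Gamma\!\left(\tfrac{1-s}{1+\alpha}\right)\Gamma(s)\exp\!\left(\pm\tfrac{i\pi}{2}\!\left[\tfrac{1-\theta}{\alpha+1}+\tfrac{(\alpha+\theta)s}{\alpha+1}\right]\right),
\]
with $+$ for $j=1$ and $-$ for $j=2$.

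Finally I would add $\tilde\phi_1$ and $\tilde\phi_2$, which converts the exponentials into a cosine, and then use the reflection formula
\[
\cos(\pi z/2)=\frac{\pi}{\Gamma\!\left(\frac{1+z}{2}\right)\Gamma\!\left(\frac{1-z}{2}\right)}
\]
with $z=\tfrac{1-\theta}{\alpha+1}+\tfrac{(\alpha+\theta)s}{\alpha+1}$ to rewrite the combined Mellin transform $\tilde\phi(s)$ as a ratio of four gamma functions in $s$. Performing the inverse Mellin transform gives a Mellin-Barnes integral that matches, term for term, the defining contour integral of $H^{1,1}_{2,2}$ in (\cite[Sec.~1.2]{saxena}), with upper parameters $\bigl(\tfrac{\alpha}{\alpha+1},\tfrac{1}{1+\alpha}\bigr),\bigl(\tfrac{2+\alpha-\theta}{2(\alpha+1)},\tfrac{\alpha+\theta}{2(\alpha+1)}\bigr)$ and lower parameters $(0,1),\bigl(\tfrac{2+\alpha-\theta}{2(\alpha+1)},\tfrac{\alpha+\theta}{2(\alpha+1)}\bigr)$. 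Undoing the substitutions (\ref{subs}) then gives $\phi(x)$ in exactly the form claimed, completing the proof. The delicate point throughout is ensuring that the contour parameter $\gamma$ in the Mellin-Barnes integral lies in the common strip of analyticity for all four gamma factors, which is guaranteed by the hypothesis $1<\alpha\leq 2$ and $|\theta|\leq\min\{\alpha,2-\alpha\}$.
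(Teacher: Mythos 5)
Your argument is correct and reproduces, essentially step for step, the paper's own proof of the general-$\theta$ linear-potential theorem immediately preceding this corollary (Fourier transform to a separable first-order ODE in $p$, piecewise exponential solution, Mellin transform of the oscillatory half-line integrals, reflection formula, Mellin--Barnes identification with $H^{1,1}_{2,2}$). The paper, however, obtains the corollary simply by setting $\theta=0$ in that theorem and replacing $C_{\alpha}$ by $D_{\alpha}$ via the equivalence of (\ref{rem1}) and (\ref{rem2}); the only step missing from your write-up is this explicit final specialization, which turns your parameters $\left(\frac{2+\alpha-\theta}{2(\alpha+1)},\frac{\alpha+\theta}{2(\alpha+1)}\right)$ into those appearing in (\ref{miss13}).
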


By the power series representation of the Fox $ H- $function (\cite[(A.69)]{saxena}) and the formula
\[
\sin(\pi{z})=\frac{\pi}{\Gamma(z)\Gamma(1-z)},
\]
we can write (\ref{miss13}) as
\begin{multline}\label{series}
\phi(x)=\frac{2N}{\left(\alpha{}+1\right)}\times\\
\sum_{k=0}^{\infty{}}\Gamma\left(\frac{k+1}{\alpha{}+1}\right)\sin{\left(\frac{(\alpha+2)(k+1)\pi}{2(\alpha+1)}\right)}\frac{1}{k!}{\left[\left(x-\frac{E}{A}\right)\frac{1}{\hslash{}}{\left(\frac{D_{\alpha{}}}{A\hslash{}\left(\alpha{}+1\right)}\right)}^{\frac{-1}{\left(\alpha{}+1\right)}}\right]}^k.
\end{multline}

\raggedright{
This enables us to formulate the following statement.}
\begin{cor}(\cite{lifshitz})
If $ \alpha=2 $ and $ \theta=0 $, the standard wave function solution of the equation (\ref{lfse}) has the form
\[
\phi(x)=\frac{\lambda}{\pi{}}\sum_{k=0}^{\infty{}}\Gamma\left(\frac{k+1}{3}\right)\sin{\left(\frac{2\left(k+1\right)\pi{}}{3}\right)}\frac{1}{k!}{\left(3^{\frac{1}{3}}u\right)}^k,
\]
where $ \lambda $ is a constant and $ u=\left(x-\left(\frac{E}{A}\right)\right){\left(\frac{2mA}{{h}^2}\right)}^{\frac{1}{3}}. $
\end{cor}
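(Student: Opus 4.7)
The plan is to obtain Corollary 3.3 as a direct specialization of the series representation \eqref{series}, which already handles the case $\theta=0$ via Corollary 3.2. Since \eqref{series} is written in terms of the fractional diffusion coefficient $D_{\alpha}$ (legitimate here because $\theta=0$ makes equations \eqref{rem1} and \eqref{rem2} equivalent, as noted in the paper), the only remaining task is to set $\alpha=2$, use the physical identification $D_{2}=1/(2m)$ given in the introduction, and simplify the resulting constants.

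First I would set $\alpha=2$ throughout \eqref{series}. The Gamma argument $\tfrac{k+1}{\alpha+1}$ collapses to $\tfrac{k+1}{3}$, and the sine argument $\tfrac{(\alpha+2)(k+1)\pi}{2(\alpha+1)}$ becomes $\tfrac{4(k+1)\pi}{6}=\tfrac{2(k+1)\pi}{3}$, exactly matching the sine in the target formula. The constant in front $\tfrac{2N}{\alpha+1}$ becomes $\tfrac{2N}{3}$, which is absorbed into the generic constant $\lambda/\pi$ by setting $\lambda=2\pi N/3$ (the statement of Corollary 3.3 only requires $\lambda$ to be ``a constant'').

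Next I would simplify the argument inside the brackets of \eqref{series}. Plugging $\alpha=2$ and $D_{2}=1/(2m)$ into
\[
\left(x-\frac{E}{A}\right)\frac{1}{\hslash}\left(\frac{D_{\alpha}}{A\hslash(\alpha+1)}\right)^{-1/(\alpha+1)}
\]
yields $\bigl(x-E/A\bigr)\,\hslash^{-1}\bigl(6mA\hslash\bigr)^{1/3}$, which rearranges to $3^{1/3}\bigl(x-E/A\bigr)\bigl(2mA/\hslash^{2}\bigr)^{1/3}=3^{1/3}u$ with $u$ defined as in the statement. Substituting this back into the series produces
\[
\phi(x)=\frac{\lambda}{\pi}\sum_{k=0}^{\infty}\Gamma\!\left(\frac{k+1}{3}\right)\sin\!\left(\frac{2(k+1)\pi}{3}\right)\frac{1}{k!}\bigl(3^{1/3}u\bigr)^{k},
\]
which is exactly the claimed form.

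The calculation is entirely routine; the only point requiring care is the bookkeeping for the prefactor powers of $\hslash$, $m$, $A$ when passing from $C_{\alpha}$ to $D_{\alpha}=1/(2m)$ and extracting the factor $3^{1/3}$. There is no analytical obstacle, since the series \eqref{series} was already derived in closed form and the Mittag-Leffler/Fox calculations have been performed upstream; everything here is substitution and reidentification of constants.
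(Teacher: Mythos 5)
Your proposal is correct and follows essentially the same route as the paper: the paper derives the series representation \eqref{series} from Corollary 3.2 and then obtains Corollary 3.3 by exactly the specialization $\alpha=2$, $D_{2}=1/(2m)$ that you carry out, absorbing $2N/3$ into the constant $\lambda/\pi$. Your simplification of the series argument to $3^{1/3}u$ is the same bookkeeping the paper leaves implicit (note the paper writes $h$ where $\hslash$ is clearly intended in the definition of $u$).
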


\end{document}